\numberwithin{equation}{section}
\theoremstyle{plain}
\newtheorem{thm}{Theorem}[section]
\newtheorem{prop}[thm]{Proposition}
\newtheorem{coroll}[thm]{Corollary}
\theoremstyle{definition}
\newtheorem{definition}[thm]{Definition}
\newtheorem{remark}[thm]{Remark}
\newtheorem*{notation}{Notation}
\newcommand{\bigslant}[2]{{\raisebox{.2em}{$#1$}\left/\raisebox{-.2em}{$#2$}\right.}}
\def\Pic{{\rm Pic}}
\def\per{{\rm per}}
\begin{document}

\title{Rotor-routing orbits in directed graphs and the Picard group}

\author{Lilla T\'othm\'er\'esz%
\thanks{tmlilla@cs.elte.hu}}

\date{}

\affil{Department of Computer Science, E\"otv\"os Lor\'and University, P\'azm\'any P\'eter s\'et\'any 1/C, Budapest H-1117, Hungary}

\maketitle

\begin{abstract}

In \cite{Holroyd08}, Holroyd, Levine, M\'esz\'aros, 
Peres, Propp and Wilson characterize recurrent chip-and-rotor configurations for strongly connected digraphs. 
However, the number of steps needed to recur, and the 
number of orbits is left open for general digraphs.
Recently, these questions were ans\-wered by Pham \cite{pham_rotor}, using linear algebraic methods.
We give new, purely combinatorial proofs for these formulas.
We also relate rotor-router orbits to the chip-firing game: The number of recurrent rotor-router unicycle-orbits equals to the order of the Picard group of the graph, defined in the sense of \cite{DirRiemRoch}. Moreover, during a period of the rotor-router process, the same chip-moves happen, as during firing the period vector in the chip-firing game.

\end{abstract}

\section{Introduction} \label{sec::def}

Rotor routing is a deterministic process, that induces a walk of a chip on a 
directed graph.
It was introduced several times, see \cite{Priez96,Rabani98,twoballs}. 
An important feature of rotor-routing is its close relationship with chip-firing, and the Picard group of the graph \cite{Holroyd08}. Rotor routing can be thought of as a relaxed version of chip-firing.

Throughout this paper, let $D=(V,E)$ denote a strongly connected directed graph, and for each vertex $v$, fix a cyclic ordering of the outgoing edges from $v$. For an edge $e=(v,w)$, denote by $e^+$ the edge following $e$ in the cyclic order at $v$.
We denote the set of out-neighbors (in-neighbors) of a vertex $v$ by $\Gamma^+(v)$ ($\Gamma^-(v)$), the out-degree (in-degree) of a vertex $v$ by $d^+(v)$ ($d^-(v)$).

\begin{definition}[Rotor configuration]
 A rotor configuration on $D$ is a function $\varrho$ that assigns to each non-sink vertex $v$ an out-edge with tail $v$. We call $\varrho(v)$ the rotor at $v$.
\end{definition}

\begin{definition}[Chip-and-rotor configuration]
  A chip-and-rotor configuration is a pair $(w,\varrho)$, where $\varrho$ is a rotor configuration, and $w\in V(D)$. We imagine that a chip is placed on vertex $w$.
\end{definition}

The rotor router operation is a map that sends a chip-and-rotor configuration $(w, \varrho)$ to the chip-and-rotor configuration $(w^+, \varrho^+)$, where
$\varrho^+$ is a rotor configuration with
$$
\varrho^+(v) = \left\{\begin{array}{cl} \varrho(v) & \text{for }v\neq w,  \\
         \varrho(v)^+ & \text{for } v=w,
      \end{array} \right.
$$
and $w^+$ is the head of $\varrho^+(w)$.


A natural question to ask about the rotor-router operation is whether starting from a given chip-and-rotor configuration, and iterating the rotor-router operation, we ever arrive back to the initial configuration.

\begin{definition}
We call a chip-and-rotor configuration $(w,\varrho)$ \emph{recurrent}, if starting from $(w, \varrho)$, the rotor-router process eventually leads back to $(w, \varrho)$, and we call it \emph{transient} otherwise.
\end{definition}

Holroyd, Levine, M\'esz\'aros, Peres, Propp and Wilson \cite{Holroyd08} gave a characterization for recurrent chip-and-rotor configurations in strongly connected digraphs. To state their result, we need a definition.

\begin{definition}[unicycle \cite{Holroyd08}]
A unicycle is a chip-and-rotor configuration $(w, \varrho)$, where the set of edges $\{\varrho(v): v\in V(D)\}$ contains a unique directed cycle, and $w$ lies on this cycle.
\end{definition}

\begin{thm}[{\cite[Theorem 3.8, Lemma 3.4]{Holroyd08}}]
If $D$ is strongly connected, then the recurrent chip-and-rotor configurations
are exactly the unicycles. Moreover, the rotor-router operation is a permutation on the set of unicycles.
\end{thm}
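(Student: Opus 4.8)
The plan is to prove the two assertions separately: first that the rotor-router operation maps the (finite) set of unicycles bijectively to itself, and then that a configuration is recurrent precisely when it is a unicycle. The permutation statement yields the easy direction of the characterisation for free: once the operation is a bijection of the finite set of unicycles, every unicycle lies on a cycle of this permutation and is therefore recurrent.

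First I would show that the operation sends unicycles to unicycles. Let $(w,\varrho)$ be a unicycle with unique cycle $C\ni w$ in the functional graph $v\mapsto\operatorname{head}(\varrho(v))$ (every vertex has out-degree $1$ there, so following rotors from any vertex eventually reaches $C$, hence reaches $w$). The image $(w^+,\varrho^+)$ differs only in the rotor at $w$, which now points to $w^+$. The key observation is that in the new functional graph every forward orbit still reaches $w$: for $u\neq w$ the new orbit agrees with the old one until it first hits $w$, and the old orbit does hit $w$; and the orbit of $w$ reaches $w$ through $w^+$. Consequently $w$ lies on a cycle, every cycle must contain $w$ (a vertex on a cycle never leaves it, yet every orbit reaches $w$), and since out-degrees are $1$ this cycle is unique. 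As $w^+$ is the successor of $w$ on it, $(w^+,\varrho^+)$ is a unicycle. I record the byproduct that the previous chip vertex $w$ is exactly the predecessor of the new chip $w^+$ on the new cycle.

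Next I would prove injectivity, which upgrades the map to a permutation of the finite unicycle set. Suppose two unicycles map to the same $(u,\sigma)$. By the byproduct, the previous chip vertex of each is the predecessor of $u$ on the unique cycle of $\sigma$; this predecessor is determined by $(u,\sigma)$, so the two chip vertices coincide, say at $w$. Off $w$ both rotor configurations equal $\sigma$, and at $w$ each is the cyclic predecessor of $\sigma(w)$ (since $e\mapsto e^+$ is a bijection at $w$); hence the two configurations agree. An injective self-map of a finite set is a bijection, so the operation is a permutation of the unicycles, and in particular every unicycle is recurrent.

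The main obstacle is the remaining implication: a recurrent configuration must be a unicycle. Here I would first establish the global fact that, on a strongly connected digraph with no sink, the rotor walk from any starting configuration visits every vertex infinitely often. This follows from a boundary argument: if the set $S$ of infinitely-visited vertices were a proper subset, then by strong connectivity some $v\in S$ has an edge to $V\setminus S$, and since the rotor at $v$ cycles through all its out-edges on every block of $d^+(v)$ visits, the chip would leave $S$ infinitely often, a contradiction. Now let $(w,\varrho)$ be recurrent with period $T$; by periodicity every vertex is visited during $[0,T)$. For each $v$ let $s_v\in[0,T)$ be the time of its last visit in the period; these times are distinct. Since the rotor at $v$ is not touched after $s_v$ and the configuration repeats at time $T$, the edge $\varrho(v)$ is exactly the edge along which the chip leaves $v$ at time $s_v$, so the chip sits at $\operatorname{head}(\varrho(v))$ at time $s_v+1$. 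Hence $s_{\operatorname{head}(\varrho(v))}>s_v$ for every $v$ except the unique vertex $p$ with $s_p=T-1$, whose rotor sends the chip to $w$ at time $T$. Thus $s$ strictly increases along every edge of the functional graph except the single back-edge $p\to w$; deleting that edge leaves an acyclic graph, so there is exactly one cycle, it uses $p\to w$, and following rotors from $w$ (increasing $s$ until $p$ is reached) closes it. This unique cycle contains $w$, so $(w,\varrho)$ is a unicycle, completing the characterisation.
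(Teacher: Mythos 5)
Your proof is correct, but there is nothing in the paper to compare it against: this theorem is imported from Holroyd, Levine, M\'esz\'aros, Peres, Propp and Wilson (it is stated with the citation \cite{Holroyd08}, Theorem 3.8 and Lemma 3.4) and is used as a black box; the paper's own proofs start only with the period-length theorem and the orbit-counting theorem. Checking your argument on its own merits: step (i), closure of the unicycle set, is sound, including the key byproduct that the old chip position is the predecessor of the new chip position on the new cycle; step (ii), injectivity (using that $e\mapsto e^+$ is a bijection on the out-edges at $w$), correctly upgrades the map to a permutation of a finite set and hence gives recurrence of every unicycle; step (iii) correctly combines the infinite-visits fact (the set $S$ of infinitely visited vertices is nonempty by pigeonhole and cannot be proper by strong connectivity and full rotor turns) with the last-visit-time argument: since rotors are untouched after the last visit and the configuration is $T$-periodic, $\varrho(v)$ is the last-exit edge from $v$, so $s$ strictly increases along every edge of the functional graph except the single edge $p\to w$, forcing a unique cycle, which passes through $w$. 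Your argument also cleanly separates what needs strong connectivity (only the recurrent $\Rightarrow$ unicycle direction) from what does not (the permutation statement), which matches the structure of the original source: your (i)+(ii) is essentially their Lemma 3.4 (they exhibit an explicit inverse operation, which is equivalent to your injectivity argument), and your (iii) is essentially their proof of Theorem 3.8, which likewise rests on an infinite-visits lemma and a last-exit argument. So relative to this paper you have supplied a complete, self-contained proof of a result it only cites, and relative to the literature your route is the standard one rather than a genuinely new one.
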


Recently, Pham gave formulas for the number and length of these unicycle-orbits, using linear algebraic methods. In this paper, we give new, purely combinatorial proofs for these formulas, and explore the relationship of rotor-router unicycle-orbits with the chip-firing game and the Picard group.
Let us state some more definitions.




Let us identify $V(D)$ with $\{1,\dots, |V(D)|\}$.

\begin{definition}
The Laplacian matrix of a digraph $D$ is the
following matrix $L_D$:
$$L_D(i,j) = \left\{\begin{array}{cl} -d^+(i) & \text{if } i=j, \\
        d(j, i) & \text{if } i\neq j,
      \end{array} \right.$$
where $d(i, j)$ denotes the multiplicity of edges pointing from $i$ to $j$.
\end{definition}


\begin{definition}[period vector \cite{BL92}]
  For a strongly connected digraph $D$, let $\per_D$ be the unique vector $x\in \mathbb{Z}^{|V|}$ such that $L_Dx=0$, and the entries of $x$ have no non-trivial common divisor. This vector is unique because the Laplacian matrix of a strongly connected digraph has a one-dimensional kernel.
\end{definition}

We identify vectors in $\mathbb{Z}^{|V|}$ with integer valued functions from $V(D)$. According to this, we  write $\per_D(i)$ for the $i$-th coordinate of $\per_D$. 

\begin{definition}
  For a digraph $D$, and vertex $w\in V(D)$ a spanning in-arbores\-cence of $D$ rooted at $w$ is a subdigraph $D'$ such that $V(D')=V(D)$, $D'$ is acyclic, and for each vertex $v\in V(D)-w$ there is a unique directed path from $v$ to $w$ (i.e. the underlying undirected graph of $D'$ is a tree and each edge is oriented towards $w$).
\end{definition}

\section{Results}

Now we can state the theorem about the length of the period of the rotor-router process starting from a given unicycle. The statement of the theorem is equivalent to the first part of \cite[Theorem 1]{pham_rotor}. Also, it is a generalization of Lemma 4.9 of \cite{Holroyd08}, which only considers Eulerian digraphs.

\begin{thm} \label{thm::perhossz}
  Let $(w, \varrho)$ be a unicycle on the strongly connected digraph $D$. If we iterate the rotor-router operation starting from $(w, \varrho)$ until we arrive back to $(w, \varrho)$, then each vertex $v$ is reached $\per_D(v) d^+(v)$ times by the chip. Therefore, the rotor at vertex $v$ makes $\per_D(v)$ full turns, each edge $(u,v)$ is traversed $\per_D(u)$ times, and the length of the period is $\sum_{v\in V(G)}\per_D(v) d^+(v)$.
\end{thm}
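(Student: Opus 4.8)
The plan is to track how many times the chip visits each vertex during one full period, and to show that this visit-count vector is forced to be a scalar multiple of $\per_D$, with the scalar pinned down by the fact that the process returns exactly to the start.

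The plan is to run the process for one full period, record how many complete turns each rotor makes, and identify the resulting vector as a generator of $\ker L_D$. Fix the unicycle $(w,\varrho)$ and iterate until the first return to $(w,\varrho)$; such a return exists because, by the theorem of Holroyd et al.\ stated above, the rotor-router operation permutes the finite set of unicycles. Since the rotors come back to their starting positions, the rotor at each $v$ has advanced by a multiple of $d^+(v)$; write $t(v)$ for the number of full turns it makes. In one full turn the rotor uses each of its $d^+(v)$ out-edge slots exactly once, so each individual out-edge of $v$ is traversed exactly $t(v)$ times, and the chip departs from $v$ exactly $t(v)\,d^+(v)$ times. Because the chip's trajectory is a closed walk from $w$ back to $w$, the number of departures from $v$ equals the number of arrivals at $v$; counting arrivals edge by edge gives $\sum_{u} d(u,v)\,t(u)$. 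Equating the two counts at every vertex yields $\sum_{u} d(u,v)\,t(u) = t(v)\,d^+(v)$, which is precisely the system $L_D\,t = 0$.

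Since $D$ is strongly connected, $\ker L_D$ is one-dimensional and spanned by $\per_D$, so $t = c\,\per_D$ for a scalar $c$. Integrality of $t$ together with the primitivity of $\per_D$ forces $c\in\mathbb{Z}$, and strict positivity of $\per_D$ (a feature of strongly connected digraphs) together with the fact that the chip actually moves forces $t$ to be a positive vector, so $c\ge 1$. The same computation applies verbatim to \emph{any} return, so every return occurs at a turn vector $c'\per_D$ with $c'$ a positive integer, hence after exactly $c'\sum_{v}\per_D(v)\,d^+(v)$ steps; in particular every return takes at least $\sum_v \per_D(v)\,d^+(v)$ steps. Once we know $c=1$, all four assertions follow at once: the rotor at $v$ makes $\per_D(v)$ turns, each out-edge of $u$ (hence each edge $(u,v)$) is traversed $\per_D(u)$ times, vertex $v$ is reached $\per_D(v)\,d^+(v)$ times, and the period is $\sum_v \per_D(v)\,d^+(v)$.

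The remaining and genuinely substantial point is to prove $c=1$; the balance argument alone only pins the first return down to a positive-integer multiple of $\per_D$. I would establish this by reducing to the Eulerian case already known from Lemma 4.9 of \cite{Holroyd08}. Form the multigraph $D_{\per}$ obtained from $D$ by replacing each out-edge of $u$ with $\per_D(u)$ parallel copies; the kernel identity above shows that $D_{\per}$ has equal in- and out-degree $\per_D(v)\,d^+(v)$ at every vertex, so $D_{\per}$ is Eulerian. Choosing the cyclic order at $u$ in $D_\per$ to be $\per_D(u)$ consecutive copies of the cyclic order at $u$ in $D$, one full turn of the $D_\per$-rotor corresponds to $\per_D(u)$ turns of the $D$-rotor, and the given unicycle lifts to a unicycle on $D_\per$. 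By Lemma 4.9 the rotor-router process on the Eulerian graph $D_\per$ returns to this lifted unicycle after traversing each of its edges exactly once, i.e.\ after $\sum_v \per_D(v)\,d^+(v)$ steps; projecting this run back to $D$ exhibits a return of the original process with turn vector exactly $\per_D$. Combined with the previous paragraph, where we saw that no return can occur in fewer than $\sum_v \per_D(v)\,d^+(v)$ steps, this forces the first return to be this one, so $c=1$. I expect the delicate part to be checking that the lift is faithful: that the lifted rotor configuration is genuinely a unicycle, that each step of the $D_\per$-process projects to a legal step of the $D$-process, and that the terminal $D_\per$-configuration projects to the initial $D$-configuration.
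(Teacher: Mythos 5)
Your proof is correct, and for the crucial step it takes a genuinely different route from the paper's. Both arguments begin identically: the balance/kernel computation showing that the turn vector $t$ of any return satisfies $L_D t=0$, hence $t=c\,\per_D$ with $c$ a positive integer. To pin down $c=1$, the paper stays inside $D$ and argues combinatorially: call $v$ good while it has forwarded the chip at most $\per_D(v)d^+(v)$ times; the first vertex to exceed its quota can only be $w$ (any other vertex could not yet have received enough chips); and since the rotors $\{\varrho(v):v\neq w\}$ of a unicycle form a spanning in-arborescence $T$ rooted at $w$, one propagates along $T$, outward from $w$, the claim that when $w$ receives its $\per_D(w)d^+(w)$-th chip every vertex has forwarded exactly its quota. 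You instead reduce to the known Eulerian case: the blow-up $D_{\per}$ (each out-edge of $u$ replaced by $\per_D(u)$ parallel copies) is Eulerian precisely because $L_D\per_D=0$; with the consecutive-copies rotor ordering, the $D_{\per}$-process projects step by step onto the $D$-process, and Lemma 4.9 of \cite{Holroyd08} exhibits a return after exactly $\sum_v\per_D(v)d^+(v)$ steps, which your lower bound then forces to be the first return. The verifications you flag as delicate do all go through routinely: the lifted rotor configuration has the same vertex-successor map as $\varrho$, hence is a unicycle on $D_{\per}$; rotor advancement commutes with the edge projection, so every $D_{\per}$-step projects to a legal $D$-step; and the terminal configuration is the lifted initial one, which projects to $(w,\varrho)$. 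What your route buys is conceptual economy: the general theorem becomes the Eulerian lemma seen through a covering construction, with $\per_D$ appearing as an Euler-tour multiplicity. What the paper's route buys is self-containment — it does not import Lemma 4.9, whose generalization is the very point of the theorem — and an explicit use of the arborescence structure hidden in the unicycle, which is what makes the argument ``purely combinatorial'' in the sense the paper advertises.
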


\begin{coroll}
In a strongly connected digraph, each rotor-router unicycle-orbit has the same size.
\end{coroll}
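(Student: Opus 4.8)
The plan is to read the result off directly from Theorem~\ref{thm::perhossz}, so the argument should be short. First I would recall the cited result of Holroyd, Levine, M\'esz\'aros, Peres, Propp and Wilson, which tells us that the rotor-router operation acts as a \emph{permutation} on the finite set of unicycles. Consequently the unicycle-orbits are exactly the cycles of this permutation, and the size of the orbit through a unicycle $(w,\varrho)$ equals the number of rotor-router steps needed to return to $(w,\varrho)$—that is, the length of the period.

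Next I would invoke Theorem~\ref{thm::perhossz}, which computes this period length as $\sum_{v\in V(D)}\per_D(v)\,d^+(v)$. The crucial observation is that this expression depends only on the digraph $D$, through its period vector $\per_D$ and its out-degrees $d^+$, and not at all on the particular unicycle $(w,\varrho)$ from which we started. Hence every unicycle-orbit has size $\sum_{v\in V(D)}\per_D(v)\,d^+(v)$, and in particular all orbits share this common value.

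Since Theorem~\ref{thm::perhossz} does essentially all of the work, I do not anticipate any genuine obstacle. The only point deserving a word of care is the identification of \emph{orbit size} with \emph{period length}; this is justified precisely because the operation is a permutation of a finite set and therefore decomposes into disjoint cycles, on each of which repeated application returns to the starting point after exactly as many steps as the cycle is long. With that identification in place, the independence of the formula from $(w,\varrho)$ immediately yields the claim.
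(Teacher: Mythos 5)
Your proof is correct and matches the paper's (implicit) reasoning: the corollary is stated as an immediate consequence of Theorem~\ref{thm::perhossz}, since the period length $\sum_{v\in V(D)}\per_D(v)\,d^+(v)$ depends only on $D$ and not on the starting unicycle. Your extra care in identifying orbit size with period length via the permutation structure from \cite{Holroyd08} is exactly the right justification, even though the paper does not spell it out.
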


\begin{proof}[Proof of Theorem \ref{thm::perhossz}]
  Since at the end of the process, we arrive back to the initial chip-and-rotor 
  state, each rotor makes some full turns. Say, the rotor at vertex $v$ makes $x(v)\in \mathbb{N}$ 
  full turns. Moreover, since the chip also returns to its initial place, each vertex receives the 
  chip the same number of times as it forwards it.
  Each vertex $v$ receives the chip $\sum_{u\in \Gamma^-(v)}x(u)$ times, and forwards it $x(v)d^+(v)$ times.
  Hence for each vertex $v$, the equation
  $$\sum_{u\in \Gamma^-(v)}x(u)=x(v)d^+(v)$$
  holds. Thinking of $x$ as a vector, we can rewrite this in the form $L_Dx=0$, 
  where $L_D$ is the Laplacian matrix of $D$. Thus $x$ is a multiple of $\per_D$.

  We need to show that $x$ equals to $\per_D$.
  At any moment, call a vertex $v$ \emph{good}, if $v$ has forwarded the chip at most $\per_G(v)\cdot d^+(v)$ times until that moment.
  Suppose $x\neq \per_D$.
  Then there is a moment, when some vertex $v$ forwards the chip for the $\per_D(v) d^+(v)+1$-th time, while all the other vertices are still good.
  This means that vertex $v$ has received the chip at most $\sum_{u\in \Gamma^-(v)}\per_D(u)=\per_D(v) d^+(v)$ times. But then it can forward it $\per_D(v) d^+(v)+1$ times only if $v=w$. 

  Now we see, that the first vertex $v$ to forward the chip $\per_G(v)\cdot d^+(v) +1$ times can only be $w$. It is enough to show, that when $w$ receives the chip for the $\per_G(w) d^+(w)$-th time, each other vertex $v$ has already forwarded the chip $\per_G(v)\cdot d^+(v)$ times, because in this case, each vertex $v$ forwarded the chip exactly $\per_G(v)\cdot d^+(v)$ times. Hence each rotor made some full turns, and we assumed that $w$ has just received the chip, thus, we are in state $(w, \varrho)$.

  Since $\varrho$ is a unicycle, by definition, $w$ lies on the unique cycle in $\{\varrho(v) : v\in V(D)\}$. Therefore, the edges $\{\varrho(v) : v\in V(D)\setminus w\}$ form a spanning in-arborescence $T$ with root $w$.

  We claim, that if at some moment all the vertices are good, and a vertex $v$ has received the chip $\per_D(v) d^+(v)$ times, then each in-neighbor $u$ of $v$ in $T$ has received and forwarded the chip $\per_D(u) d^+(u)$ times.

  This is indeed so, because $v$ received the chip at most $\sum_{u\in\Gamma^-(v)}\per_D(u)=\per_D(v) d^+(v)$ times, so to have equality, $v$ must have received the chip $\per_D(u)$ times from each in-neighbor. But for those in-neighbors $u$, where $uv\in E(T)$, the chip is forwarded towards $v$ for the $d^+(u)$-th, $2d^+(u)$-th, $\dots$ times, so from these vertices, the chip must have been forwarded $\per_D(u) d^+(u)$ times. As $w$ is a sink in $T$, $u\neq w$, so then $u$ also received the chip $\per_D(u) d^+(u)$ times.

Now take the moment, when $w$ receives the chip for the $\per_D(w) d^+(w)$ -th time.
We argued, that in this moment, all the vertices are good. Since $T$ is a spanning in-arborescence with root $w$, by iterating the above argument, we get that each vertex $v$ has forwarded the chip $\per_D(v) d^+(v)$ times.

\end{proof}

 It is well-known, that rotor-routing has a strong connection to the chip-firing game. For more information on this connection, and for an introduction to chip-firing, see \cite{Holroyd08}.
 Theorem \ref{thm::perhossz} further illustrates this connection:
 According to the theorem, in a recurrent rotor-router orbit, while the process returns to the initial unicycle, exactly the same chip-moves happen, as in chip-firing during the firing of the period vector.

 Now we give a formula for the number of rotor-router unicycle-orbits. This formula is stated also in \cite[Theorem 1]{pham_rotor}.

\begin{notation}
Let us denote by $\mathcal{T}(D,w)$ the number of spanning in-arborescences of a digraph $D$ rooted at vertex $w$.
\end{notation}

\begin{thm} \label{thm::orbitszam}
The number of rotor-router unicycle-orbits of a strongly connected digraph $D$ is $\frac{\mathcal{T}(D,w)}{\per_D(w)}$ for an arbitrary choice of $w\in V(D)$.
\end{thm}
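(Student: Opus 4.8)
The plan is to count orbits as the total number of unicycles divided by the common orbit size. By the Corollary to Theorem~\ref{thm::perhossz}, every rotor-router unicycle-orbit has the same size, and by Theorem~\ref{thm::perhossz} this size is $S=\sum_{v\in V(D)}\per_D(v)d^+(v)$. Hence, writing $N$ for the total number of unicycles, the number of orbits is $N/S$. The work is therefore to evaluate $N$, and then to show that $N/S$ collapses to the claimed quantity $\mathcal{T}(D,w)/\per_D(w)$ for every $w$.

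First I would count the unicycles by grouping them according to their marked vertex. I claim the unicycles $(w,\varrho)$ with a fixed marked vertex $w$ are in bijection with the pairs consisting of a spanning in-arborescence rooted at $w$ together with a choice of out-edge at $w$. Indeed, if $(w,\varrho)$ is a unicycle then, as already observed in the proof of Theorem~\ref{thm::perhossz}, the edges $\{\varrho(v):v\in V(D)\setminus w\}$ form a spanning in-arborescence rooted at $w$, and $\varrho(w)$ is an out-edge of $w$; conversely, given such an arborescence $T$ and an out-edge $e$ at $w$, adjoining $e$ produces a configuration $\varrho$ whose edge set has a unique cycle, necessarily through $w$, so $(w,\varrho)$ is a unicycle. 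This gives $\mathcal{T}(D,w)\,d^+(w)$ unicycles with marked vertex $w$, and summing over $w$ yields $N=\sum_{w\in V(D)}\mathcal{T}(D,w)\,d^+(w)$.

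The heart of the proof, and the step I expect to be the main obstacle, is to show that the vector $t$ with $t(w)=\mathcal{T}(D,w)$ lies in $\ker L_D$; since this kernel is one-dimensional and spanned by $\per_D$, it follows that $t=c\,\per_D$ for some constant $c$, equivalently $\mathcal{T}(D,w)/\per_D(w)=c$ is independent of $w$ (note $\per_D(w)>0$ for a strongly connected digraph). Unwinding $L_Dt=0$, this amounts to the identity $d^+(i)\,\mathcal{T}(D,i)=\sum_{j\neq i}d(j,i)\,\mathcal{T}(D,j)$ for each vertex $i$, which I would prove combinatorially by checking that both sides count the rotor configurations $\varrho$ whose edge set $\{\varrho(v):v\in V(D)\}$ contains a unique cycle passing through $i$. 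On the left, such a $\varrho$ corresponds to the pair $(\{\varrho(v):v\neq i\},\varrho(i))$ of an in-arborescence rooted at $i$ and the cycle out-edge $\varrho(i)$. On the right, letting $j$ be the predecessor of $i$ along the cycle, the same $\varrho$ corresponds to the edge $\varrho(j)=(j,i)$ together with the in-arborescence $\{\varrho(v):v\neq j\}$ rooted at $j$, the factor $d(j,i)$ accounting for the choice of edge from $j$ to $i$. Both correspondences are bijections onto this set of configurations, which gives the identity; the delicate point is to verify that deleting exactly one cycle edge recovers a genuine arborescence, which is precisely what the unique-cycle condition guarantees.

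Finally I would assemble the pieces: substituting $\mathcal{T}(D,w)=c\,\per_D(w)$ into the count gives $N=\sum_w\mathcal{T}(D,w)d^+(w)=c\sum_w\per_D(w)d^+(w)=cS$, so the number of orbits is $N/S=c=\mathcal{T}(D,w)/\per_D(w)$, valid for any $w$. The only genuinely new ingredient beyond Theorem~\ref{thm::perhossz} is the kernel identity, and the rest is bookkeeping.
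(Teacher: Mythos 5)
Your proof is correct, but it takes a genuinely different route from the paper's. The paper never counts the full set of unicycles: it fixes a vertex $w$, counts only the unicycles whose chip sits at $w$ (there are $\mathcal{T}(D,w)\,d^+(w)$ of them, by the same arborescence-plus-out-edge bijection you use), and then observes that Theorem \ref{thm::perhossz} already tells you how many such unicycles lie in each orbit: during one period the chip reaches $w$ exactly $\per_D(w)\,d^+(w)$ times, and the configurations at these moments are pairwise distinct, so each orbit contains exactly $\per_D(w)\,d^+(w)$ unicycles with chip at $w$. The orbit count is then the ratio $\mathcal{T}(D,w)d^+(w)/\bigl(\per_D(w)d^+(w)\bigr)$, with no further input. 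Your version instead divides the total number of unicycles $N=\sum_w\mathcal{T}(D,w)d^+(w)$ by the full period length $S=\sum_w\per_D(w)d^+(w)$, and this forces you to prove an extra lemma the paper never needs: that the arborescence-count vector $t=(\mathcal{T}(D,v))_v$ lies in $\ker L_D$ (a form of the Markov chain tree theorem), so that $N/S$ collapses to $\mathcal{T}(D,w)/\per_D(w)$. Your double-counting proof of that identity is correct --- both sides enumerate full rotor configurations whose unique cycle passes through $i$, decomposed by deleting either the cycle edge leaving $i$ or the cycle edge entering $i$ --- and it is a natural refinement of the same unicycle/arborescence correspondence. The comparison cuts both ways: the paper's localization at $w$ is more economical, and in fact it obtains the proportionality $\mathcal{T}(D,\cdot)=c\,\per_D(\cdot)$ as a free corollary (this is precisely the remark about the Pham index following the theorem), whereas you must establish it up front; on the other hand, your route makes that classical identity explicit with a self-contained combinatorial proof, which has independent interest.
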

\begin{proof}
 Take a spanning in-arborescence rooted at $w$. If we add any out-edge from $w$, and put a chip at $w$, we get a unicycle. This way we can get $d^+(w)$ unicycles from each spanning in-arborescence rooted at $w$, and for each arborescence and each choice of out-edge, these are different. Moreover, this way we get each unicycle where the chip is at $w$. This means, that there are exactly $\mathcal{T}(D,w)\cdot d^+(w)$ unicycles where the chip is at $w$.

 From Theorem \ref{thm::perhossz}, we know, that in any unicycle orbit, there are $\per_D(w)\cdot d^+(w)$ unicycles where the chip is at $w$. This means that from the $\mathcal{T}(D,w)\cdot d^+(w)$ unicycles constructed above, each orbit contains $\per_D(w)\cdot d^+(w)$. Since each such unicycle is contained in an orbit, this means that the number of orbits is $\frac{\mathcal{T}(D,w)}{\per_D(w)}$.
\end{proof}

\begin{remark}
Since the number of orbits is independent of the choice of $w$,  the value $\frac{\mathcal{T}(D,w)}{\per_D(w)}$ is also independent of $w$. From the primitivity of the period vector, it follows, that this value equals the greatest common divisor of $\{\mathcal{T}(D,v) :v\in V(D)\}$. Farrell and Levine call this value the \emph{Pham index} \cite{coeuler}.
\end{remark}

Theorem \ref{thm::orbitszam} also has a nice connection to chip-firing.
Let us denote by $\mathbb{Z}^n_0$ the subgroup of $\mathbb{Z}^n$ orthogonal to $\mathbf{1}$ (the all ones vector). 
The elements of $\mathbb{Z}^n_0$ are called \emph{divisors of degree zero}.

We say that two divisors $x$ and $y$ of degree zero are equivalent, if there exists a vector $z\in \mathbb{Z}^n$ such that $x=y+L_Dz$.

The Picard 
group of a digraph $D$ is the factor of $\mathbb{Z}^n_0$ by this equivalence relation, i.e
\begin{equation*}
\Pic^0(D)=\bigslant{\mathbb{Z}^n_0}{Im(L_D)}
\end{equation*}
where $Im(L_D)=\{L_Dv : v\in \mathbb{Z}^n\}$. Note that since $\mathbf{1}^\top L_D= \mathbf{0}$, indeed $Im(L_D)$ is a subgroup of $\mathbb{Z}^n_0$.

\begin{remark}
In the literature, one commonly talks about the \emph{sandpile group}, which is slightly different from the group above. The sandpile group is defined with respect to a base vertex, and for different base vertices, one obtains different groups. The sandpile group of a strongly connected digraph $D$ with base vertex $w$ is isomorphic to the direct product of $\Pic^0(D)$ with the cyclic group of order $\per_D(w)$ \cite{coeuler}. For undirected graphs, the sandpile group with respect to any base vertex is isomorphic to the Picard group. We have chosen to use the name Picard group to avoid confusion.
\end{remark}

\begin{prop}
The order of $\Pic^0(D)$ equals the number of rotor-router unicycle-orbits.
\end{prop}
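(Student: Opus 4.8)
The plan is to combine Theorem~\ref{thm::orbitszam} with a Smith-normal-form computation of $|\Pic^0(D)|$. By Theorem~\ref{thm::orbitszam} the number of unicycle-orbits is $\mathcal{T}(D,w)/\per_D(w)$ for every $w$, so it suffices to prove
$$|\Pic^0(D)| = \frac{\mathcal{T}(D,w)}{\per_D(w)}.$$
First I would locate $\Pic^0(D)$ inside $\operatorname{coker}(L_D)=\mathbb{Z}^n/Im(L_D)$. Since the columns of $L_D$ sum to zero we have $Im(L_D)\subseteq\mathbb{Z}^n_0$, and the degree map $x\mapsto\mathbf{1}^\top x$ yields a short exact sequence
$$0 \to \Pic^0(D) \to \mathbb{Z}^n/Im(L_D) \to \mathbb{Z} \to 0,$$
whose kernel is $\mathbb{Z}^n_0/Im(L_D)=\Pic^0(D)$. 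As $\mathbb{Z}$ is free this sequence splits, so $\Pic^0(D)$ is exactly the torsion subgroup of $\operatorname{coker}(L_D)$.

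Next, because $L_D$ has a one-dimensional kernel its rank is $n-1$, so its Smith normal form has nonzero invariant factors $d_1\mid\cdots\mid d_{n-1}$. The torsion subgroup of $\operatorname{coker}(L_D)$ then has order $d_1\cdots d_{n-1}$, and this product equals the gcd of all $(n-1)\times(n-1)$ minors of $L_D$. Hence $|\Pic^0(D)|$ equals that gcd, and the task reduces to evaluating it.

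To evaluate the gcd I would use the adjugate. Since $\det L_D=0$, we have $L_D\cdot\operatorname{adj}(L_D)=\operatorname{adj}(L_D)\cdot L_D=0$, so every column of $\operatorname{adj}(L_D)$ lies in $\ker L_D=\mathbb{Q}\per_D$ and every row lies in the left kernel $\mathbb{Q}\,\mathbf{1}^\top$. Therefore $\operatorname{adj}(L_D)=\lambda\,\per_D\,\mathbf{1}^\top$ for a single scalar $\lambda$. The entries of $\operatorname{adj}(L_D)$ are, up to sign, precisely the $(n-1)\times(n-1)$ minors of $L_D$, and they equal $\lambda\,\per_D(i)$; since $\per_D$ is primitive, the gcd of all these minors is $|\lambda|\cdot\gcd_i\per_D(i)=|\lambda|$. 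Finally, the $(w,w)$ entry of $\operatorname{adj}(L_D)$ is the principal cofactor $\det(\tilde{L}_w)$, where $\tilde{L}_w$ deletes row and column $w$, and by the directed Matrix-Tree theorem $\det(\tilde{L}_w)=\pm\mathcal{T}(D,w)$. Comparing with $\operatorname{adj}(L_D)_{ww}=\lambda\,\per_D(w)$ gives $|\lambda|=\mathcal{T}(D,w)/\per_D(w)$, so chaining the equalities yields $|\Pic^0(D)|=|\lambda|=\mathcal{T}(D,w)/\per_D(w)$.

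The main points to get right are the identification of $\Pic^0(D)$ with the torsion subgroup of $\operatorname{coker}(L_D)$ (via the splitting of the degree sequence) and the Smith-normal-form fact that this torsion order equals the gcd of the maximal minors; the rank-one structure of $\operatorname{adj}(L_D)$ is the device that collapses that gcd to $|\lambda|$, with primitivity of $\per_D$ doing the final collapse. The orientation/sign in the Matrix-Tree step is only a harmless global factor $(-1)^{n-1}$ and does not affect the gcd. As a by-product this also reproves that $\mathcal{T}(D,w)/\per_D(w)$ is independent of $w$, since $\lambda$ manifestly is.
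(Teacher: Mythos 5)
Your proof is correct, but it takes a genuinely different route from the paper's. Both arguments invoke Theorem~\ref{thm::orbitszam} to reduce the proposition to the single identity $|\Pic^0(D)|=\mathcal{T}(D,w)/\per_D(w)$; the difference lies in how that identity is established. The paper gets it by combining two cited combinatorial results: Lemma 3.8 of \cite{DirRiemRoch}, which says each class of $\Pic^0(D)$ contains exactly $\per_D(w)$ $w$-reduced divisors, and the bijection from \cite{Holroyd08} between $w$-reduced divisors and spanning in-arborescences rooted at $w$. You instead compute $|\Pic^0(D)|$ by linear algebra: splitting the degree exact sequence to identify $\Pic^0(D)$ with the torsion of $\operatorname{coker}(L_D)$, equating its order with the gcd of the $(n-1)\times(n-1)$ minors via the Smith normal form, and collapsing that gcd to $\mathcal{T}(D,w)/\per_D(w)$ using the rank-one structure $\operatorname{adj}(L_D)=\lambda\,\per_D\mathbf{1}^\top$, the primitivity of $\per_D$, and the directed matrix-tree theorem. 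All the steps check out (the sign ambiguities are indeed harmless, and $\lambda\in\mathbb{Z}$ follows from primitivity of $\per_D$ by B\'ezout; you should perhaps note explicitly that finiteness of $\Pic^0(D)$, needed to call it the full torsion subgroup, follows from $\rank L_D=n-1$). What each approach buys: yours is self-contained given standard algebra, needs neither of the two cited results, and as you observe reproves the $w$-independence of $\mathcal{T}(D,w)/\per_D(w)$ for free; on the other hand it is exactly the kind of linear-algebraic computation (in the spirit of Pham \cite{pham_rotor}) that this paper deliberately set out to replace with combinatorial reasoning, whereas the paper's proof stays within the combinatorial theory of reduced divisors and makes the parallel between Picard classes and rotor-router orbits (each containing $\per_D(w)$ distinguished objects attached to the vertex $w$) structurally transparent.
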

The statement of this proposition implicilty appears also in \cite{coeuler}.
\begin{proof}
There are certain special elements in $\mathbb{Z}^n_0$ that are called $w$-reduced:
\begin{definition}[\cite{DirRiemRoch}]
For a digraph $D$, and vertex $w\in V(D)$, an element $x\in \mathbb{Z}^n_0$ is called $w$-reduced, if \\
(i) for all $v\in V(D) - w$, $x(v)\geq 0$,\\
(ii) for every $\mathbf{0}\neq f\in \mathbb{Z}^n$ with $\mathbf{0}\leq f\leq \per_D$ (coordinatewise), there exists $v\in V(D)- w$ such that $(x+L_Df)(v)<0$.
\end{definition}

Lemma 3.8 from \cite{DirRiemRoch} states, that for an arbitrary choice of $w\in V(D)$, in each equivalence class of the Picard group, there are exactly $\per_D(w)$ $w$-reduced elements.

From \cite{Holroyd08}, the spanning in-arborescences rooted at $w$ are in bijection with the $w$-reduced elements.

These two results together imply, that the Picard group has $\frac{\mathcal{T}(D,w)}{\per_D(w)}$ equivalence classes, that is, the order of the Picard group equals the number of rotor-router unicycle-orbits.
\end{proof}

\section*{Acknowledgement}
Research was supported by the MTA-ELTE Egerv\'ary Research Group and by the
Hungarian Scientific Research Fund - OTKA, K109240.

\bibliographystyle{abbrv}
\bibliography{rotor_r}

\end{document}